\documentclass[aps,amsfonts,twoside,amssymb,superscriptaddress,twocolumn]{revtex4}

\usepackage[latin1]{inputenc}
\usepackage[T1]{fontenc}
\usepackage{amsmath,amssymb,amstext,amsfonts,amsxtra}
\usepackage{amsthm}
\usepackage{graphicx}
\usepackage{mathrsfs}
\usepackage{enumerate}
\usepackage{bbm}
\usepackage{hyperref}
\usepackage{color}

\newcommand{\tr}{\operatorname{tr}}

\def\idty{{\leavevmode\rm 1\mkern -5.4mu I}} 
\def\id{{\rm id}}

\def\ket #1{\vert #1\rangle}

\def\ketbra #1#2{\vert #1\rangle \langle #2\vert}

\def\tr{\mathop{\rm tr}\nolimits}


\def\EE{{\mathcal E}}
\def\cX{{\mathcal X}}

\def\PP{{\mathbb P}}
\def\QQ{{\mathbb Q}}

\def\cZ{{\mathcal Z}}

\def\PP{{\mathbb P}}





\newcommand\prob[2]{\ensuremath{\mathrm{Pr}_{#2}[#1]}}
\newcommand\pr[1]{\ensuremath{\mathrm{Pr}[#1]}}

\newcommand\Hmax[2]{\ensuremath{H_{\max} \left(#1 \vert #2 \right)}}

\newcommand\SHmin[2]{\ensuremath{H^{\epsilon}_{\min} \left(#1 \vert #2 \right)}}
\newcommand\SHmax[2]{\ensuremath{H^{\epsilon}_{\max} \left(#1 \vert #2 \right)}}


\newcommand*{\cP}{\mathcal{P}}

\newtheorem{prop}{Proposition}

\newtheorem{de}{Definition}

\newcommand{\pass}{\textnormal{pass}}

\def\cH{{\mathcal H}}
\def\cS{{\mathcal S}}


\newcommand{\pab}{p_{\textnormal{abort}}}

\newcommand{\leakEC}{\textnormal{leak}_{\textnormal{EC}}}
\newcommand{\ppass}{p_{\textnormal{pass}}}



\begin{document}

\title{Continuous Variable Quantum Key Distribution: Finite-Key Analysis of Composable Security against Coherent Attacks}

\author{F. Furrer} \email{fabian.furrer@itp.uni-hannover.de} \address{Institut f\"ur Theoretische Physik, Leibniz Universit\"at Hannover Appelstra\ss e 2, 30167 Hannover, Germany }
\author{T. Franz}  \address{Institut f\"ur Theoretische Physik, Leibniz Universit\"at Hannover Appelstra\ss e 2, 30167 Hannover, Germany }
\author{M. Berta} \address{Institut f\"ur Theoretische Physik, ETH Z\"urich, 8093 Z\"urich, Switzerland}
\author{A. Leverrier} \address{Institut f\"ur Theoretische Physik, ETH Z\"urich, 8093 Z\"urich, Switzerland}
\author{V.~B. Scholz }  \address{Institut f\"ur Theoretische Physik, Leibniz Universit\"at Hannover Appelstra\ss e 2, 30167 Hannover, Germany }
\author{M. Tomamichel} \address{Institut f\"ur Theoretische Physik, ETH Z\"urich, 8093 Z\"urich, Switzerland}
\author{R.~F. Werner} \address{Institut f\"ur Theoretische Physik, Leibniz Universit\"at Hannover Appelstra\ss e 2, 30167 Hannover, Germany }

\begin{abstract}
We provide a security analysis for continuous variable quantum key distribution
protocols based on the transmission of two-mode squeezed vacuum states measured via
homodyne detection. We employ a version of the entropic uncertainty relation for
smooth entropies to give a lower bound on the number of secret bits which can be
extracted from a finite number of runs of the protocol. This bound is valid
under general coherent attacks, and gives rise to keys which are composably
secure. For comparison, we also give a lower bound valid under the assumption of
collective attacks. For both scenarios, we find positive key rates using
experimental parameters reachable today.
\end{abstract}


\maketitle

Quantum key distribution (QKD) is one of the first ideas from quantum information theory for turning quantum paradoxes into applications, see~\cite{Scarani_review_2009} and references therein. The task in QKD is to generate a shared key, secret from any eavesdropper (Eve), between two distant parties (Alice and Bob) using communication over a public quantum channel and an authenticated classical channel. Many different implementations of QKD have been proposed, each one with individual strengths and weaknesses. Early proposals are based on exchanging qubits, and are part of the family of discrete variable (DV) QKD protocols. Continuous variable (CV) protocols have later been proposed and offer the possibility to use standard telecom technologies (see~\cite{Weedbrook2011} and references therein), in particular, they do not require photon counters.

A generic QKD protocol starts with the distribution of, say, $N$ quantum states between the honest parties which are then measured according to the rules of the protocol. A certain part of the measurement outcomes is then used to estimate Eve's information about the remaining data from which a key of length $\ell$ is generated by classical post-processing. The goal of a finite-key security analysis is to prove that the key is secure against any wiretapping strategy of Eve, up to a small failure probability. This is in contrast to the study of asymptotic rates in which perfect security in the limit for $N$ to infinity is considered.

Eve's knowledge can be bounded by the probability that she correctly guesses Alice's measurement outcomes. This is expressed by the conditional smooth min-entropy~\cite{Renner_Phd} of the data from which the key is generated given Eve's quantum system. This ensures composable security~\cite{canetti01}, i.e., the protocol can securely be combined with other composeably secure cryptographic protocols. Since the actual state is not known, the smooth min-entropy has to be bounded for the worst case compatible with the observed measurement data. This is in general a hard task and often simplified by additional assumptions about the power of the eavesdropper.
Instead of allowing the most general, \emph{coherent} attack on the quantum communication between Alice and Bob, the eavesdropper is often restricted to \emph{collective} attacks, meaning that every signal is attacked with the same quantum operation. Under this assumption, Alice and Bob can employ state tomography to bound Eve's information and to ensure security. In the case of DV QKD, these security proofs can then often be lifted to security proofs against coherent attacks using the exponential de Finetti theorems~\cite{renato_nature} or the post-selection technique~\cite{Renner_Postselection}.

Most security analysis for CV protocols neglect finite-key effects and consider asymptotic rates by using the Devetak-Winter formula~\cite{Devetak08012005}(see \cite{Berta11} for an infinite dimensional version). We are only aware of~\cite{Leverrier2010}, where a first finite-key analysis for specific protocols under the assumption of collective Gaussian attacks was provided. Security against coherent attacks was considered in~\cite{GottesmanPreskill,Cerf05} based on entanglement purification protocols, but without a quantitive analysis. The transfer of the exponential de Finetti technique to the infinite-dimensional setting is very subtle. This is because exponential de Finetti theorems do in general not hold in infinite-dimensional systems~\cite{christandl07}, but only under additional assumptions~\cite{Renner_Cirac_09}. It is often argued that, using these results, much of the DV theory can be transferred to CV systems. Unfortunately, this approach provides only pessimistic finite-key rate estimates.

Recently, a more direct approach to prove DV QKD secure against coherent attacks was presented in~\cite{Lim11}, which is based on an entropic uncertainty relation with quantum side information for smooth entropies~\cite{Tomamichel11}. This uncertainty relation gives a bound on Eve's information about Alice's measurement outcomes in terms of the correlation between Alice and Bob. The relation between security in QKD and uncertainty relations has also been employed in~\cite{grosshans04,koashi06}. Based on the recent extension of the smooth entropy formalism to the infinite-dimensional setting~\cite{Furrer10,Berta11}, it is the objective of this letter to apply the above reasoning to an entanglement based CV protocol using two-mode squeezed vacuum states measured via homodyne detection.

\emph{Security Definition and Key Rates.}---\,A generic QKD protocol between two honest parties, Alice (A) and Bob (B) either aborts or outputs a key which consists of strings $S_A$ and $S_B$ on Alice's and Bob's side, respectively. We denote by $E$ the information which is wiretapped during the run of the protocol by an attack on the quantum channel. For CV systems this is modeled on an infinite-dimensional Hilbert space. The state of $S_A$ and $E$ can be described as a classical quantum state
\begin{align}
  \omega_{S_AE}=\sum_{s}\ketbra ss \otimes \omega_{E}^s\ ,
\end{align}
where $\omega_{E}^s$ are states on Eve's system. Three requirements have to be fulfilled by an ideal protocol: correctness, secrecy and robustness. Correctness is achieved when the output on Alice's and Bob's side agree, $S_A = S_B$. Secrecy of a key means that $S_A$ is uniformly distributed and independent of $E$ and thus given by $\omega^{\mathrm{id}}_{S_AE} = \tau_{S_A} \otimes \sigma_E$, with $\tau_{S_A}$ the uniform mixture of keys, and $\sigma_E$ an arbitrary state on the $E$ system. A protocol is called secure if it is both correct and secret. Finally, we call an ideal protocol robust if it never aborts when Eve is passive.

In reality, we can only hope to achieve an almost ideal protocol. For small parameters $\epsilon_c$, $\epsilon_s$ and an abortion probability $\pab$, we require that the protocol is $\epsilon_c$-correct, i.e.~ $\pr{S_A \neq S_B} \leq \epsilon_c$, and that the protocol is $\epsilon_s$-secret, i.e.~ $(1-\pab)\, \inf_{\sigma} \frac{1}{2} \Vert \omega_{S_AE}-\tau_{S_A} \otimes \sigma_E \Vert \leq \epsilon_s$. Note that a protocol which always aborts is secure. Thus we may impose an additional requirement on the robustness, e.g., $\pab < 1$. This security definition also ensures that the protocol is secure in the framework of composable security~\cite{canetti01}, in which different cryptographic protocols can be combined without compromising the overall security. We note that this is not the case for security definitions that require only a small mutual information between the eavesdropper and the key~\cite{RennerKoenig05}.

The measurement step of a QKD protocol produces a pair of raw keys, $X_A$ and $X_B$, held by Alice and Bob. If the protocol does not abort, the secret keys $S_A$ and $S_B$ are extracted using classical error correction and privacy amplification schemes. We do not discuss the error correction scheme here and simply assume that it will leak $\leakEC$ bits of information about the key to the eavesdropper. The correctness is checked using a hash function evaluated on both resulting strings which leads to an additional leakage of order $O(\log \frac{1}{\epsilon_c})$ \cite{Lim11}.

In the privacy amplification step, two-universal hash functions are used to compress the raw key to a final length of $\ell$ bits. Roughly speaking, this reduces Eve's knowledge about Alice's key by $N-\ell$ bits. Hence, choosing $\ell$ sufficiently small ensures that Eve has no information about the resulting bit strings and the key is independent of E. Formally, Eve's uncertainty (or lack of knowledge) is measured in terms of the probability that she can guess Alice's raw key $X_A$, i.e.\ the conditional min-entropy $H_{\min}(X_A|E)$ (see Appendix \ref{App1:def_entropies} for a formal definition).  In particular, the resulting key is $\epsilon_s$-secret if~\cite{Renner_Phd,Berta11,Tomamichel10}
\begin{align}\label{GeneralKeyLength}
  \ell \lesssim H_{\min}^{\epsilon}(X_A|E)_{\omega} - \leakEC - O(\log \frac{1}{\epsilon_s\epsilon_c})\ \, ,
\end{align}
where $\epsilon \propto \epsilon_s/\pab$. Here, the smooth min-entropy, $H_{\min}^{\epsilon}(X_A|E)$, is the optimization of the min-entropy over states which are
$\epsilon$ close to $\omega_{X_AE}$, where $\omega_{X_AE}$ denotes the joint state prior to the classical post-processing conditioned on the event that the protocol does not abort. We derive lower bounds on this entropy for the following protocol.

\emph{The Protocol.}---\,
The analysis of coherent and collective attacks can widely be treated in parallel. We consider a trusted source located in Alice's lab that produces an entangled state by mixing two squeezed vacuum states on a balanced beam splitter. We assume that each beam consists of only one bosonic mode. Alice sends one beam to Bob whereupon both perform a homodyne measurement. They choose uniformly at random between two canonically conjugated quadrature observables, amplitude and phase, such that Alice's and Bob's outcomes are maximally correlated whenever their choice agree. In the case of collective attacks they additionally perform measurements to estimate the covariance matrix. We further assume that the states generated by the source have tensor product form and that the probability that Alice measures an amplitude or phase quadrature is larger than $\alpha$ ($\hbar =1$) is bounded by $p_\alpha$. This is possible since the source is trusted and located in Alice's lab.

After all measurements are performed, the two parties reveal their measurement choices. In the case of coherent attacks, they discard the data in which they have measured different quadratures ending up with a string of $N$ measurement results. Then, they divide the continuous outcome range of the quadrature measurements into intervals $ (-\infty, -\alpha\!+\!\delta], (-\alpha\!+\!\delta,-\alpha+2\delta], \ldots, (\alpha\!-\!\delta, \infty)$ where we assume for simplicity that $2\alpha/\delta\in\mathbb N$. We denote the outcome alphabet by $\cX=\{1,2,...,2\alpha/\delta\}$. A random sample $X_A^{pe}, X_B^{pe} \in \cX^{k}$ of length $k$ are used for parameter estimation, in which they check the quality of their correlation by computing the average distance $d(X_A^{pe},X_B^{pe}) = \frac{1}{k}\sum_{i=1}^{k}\vert X_{A,i}^{pe}-X_{B,i}^{pe}\vert$ where $X_A^{pe}=(X_{A,i}^{pe})_{i=1}^k$ and$X_B^{pe}=(X_{B,i}^{pe})_{i=1}^k$. If $d(X_A^{pe},X_B^{pe})$ is smaller than $d_0$ they proceed and otherwise they abort the protocol. In case the test is passed, they use the remaining data $X_A,X_B\in\cX^n$ ($n=N-k$) as the raw key and execute the error correction and privacy amplification protocol as discussed in the paragraph before. For collective attacks, the strings $X_A\in\cX^n$ and $X_B\in\cX^n$ are generated as for coherent attacks but the remaining data (before the binning) is used to estimate the covariance matrix. This also includes the one in which Alice and Bob measured different quadratures.

\emph{Analysis for Coherent Attacks.}---\,
The goal is to bound the smooth min-entropy conditioned on the event that the protocol does not abort. For that we use an infinite-dimensional version of the entropic uncertainty relation for smooth entropies with side information~\cite{Berta11}, combining the uncertainty principle for complementary measurements with monogamy of entanglement. It states that Eve's information about the measurement outcomes $X_A$ can be bounded by using the the complementary of the measurements and the correlation between $X_A$ and $X_B$. In particular, if Alice and Bob are highly correlated after measuring e.g.,~the phase quadrature, then Eve's knowledge about the outcome of the amplitude measurement is nearly zero, since the observables are maximally complementary. We measure this correlation strength by the smooth max-entropy $H_{\max}^{\epsilon}(X_A|X_B)$, which characterizes the amount of information Alice has to send Bob to retrieve $X_A$. This leads to the bound (see Appendix \ref{App2:UncertaintyRelation})
\begin{align}\label{uncertainty}
  H_{\min}^{\epsilon}(X_A|E)_{\omega} \geq  n\log \frac{1}{c(\delta)} - H_{\max}^{\epsilon'}(X_A|X_B)_{\omega}\ ,
\end{align}
where $c(\delta)$ is the overlap of the two conjugated quadrature measurements on an interval of length $\delta$ which is well approximated by $c(\delta)\approx \delta^2/(2\pi )$ for small $\delta$. Equation~(\ref{uncertainty}) assumes a uniformly random choice of measurement settings. Since projectors onto intervals $(-\infty,-\alpha]$ and $[\alpha,\infty)$ would lead to a trivial state-independent uncertainty relation, the probability of this event has to be estimated using $p_\alpha$. In equation \ref{uncertainty} this is included in the change of the smoothing parameter from $\epsilon$ to $\epsilon'$.

This reduces the problem to upper bounding the smooth max-entropy between $X_A$ and $X_B$, which can be done by $n\cdot\log\gamma(d(X_A,X_B))$, where $\gamma$ is a function arising from a large deviation consideration (see Appendix \ref{App3:Statistics}). Using sampling theory, the quantity $d(X_A,X_B)$ can then, with high probability, be estimated by $d(X^{pe}_A,X^{pe}_B)$ plus a correction $\mu$, which quantifies its statistical deviation to $d(X_A,X_B)$ and depends on $p_\alpha$, $k$ and $n$. Since the protocol aborts if $d(X^{pe}_A,X^{pe}_B)>d_0$, we obtain the following formula for the key length:
\noindent \emph{For parameters $k,p_\alpha,\delta,d_0$, an $\epsilon_s$-secret key of length}
\begin{equation*}\label{KeyCoherent}
\ell = n [\log \frac{1}{c(\delta)}-\log \gamma(d_0 + \mu)]- \leakEC - O(\log \frac{1}{\epsilon_s\epsilon_c})\ .
\end{equation*}
\emph{can be extracted.}

\begin{figure}[h]\begin{center}\includegraphics*[width=8.8cm]{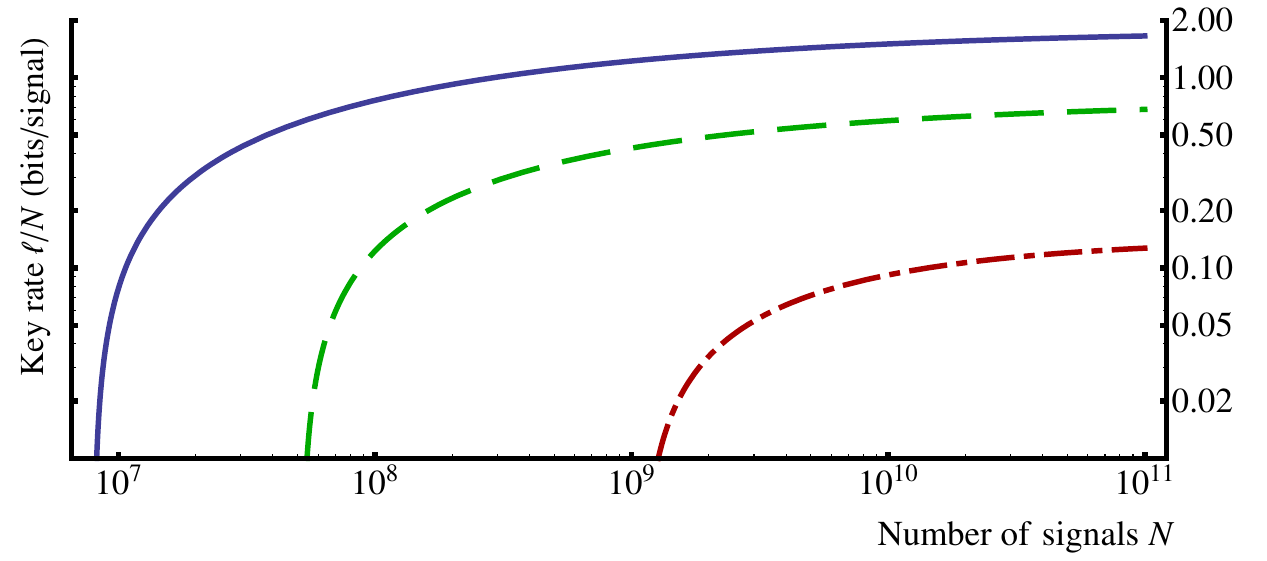}\caption{\label{fig:coherent} Key rate $\ell/N$ against coherent attacks for an input squeezing/antisqueezing of $11$dB/$16$dB and additional symmetric losses of $0\%$ (solid line), $10\%$ (dashed line) and $20\%$ (dash-dotted line). For the chosen security parameters see the main text.}\end{center}\end{figure}

We assume that the source in Alice's lab is trusted and that her measurement device is described by projections onto two canonical variables. Note that the measurement device on Bob's side need not to be trusted, except that measurements on different signals commute. Hence, the additional reference signal (local oscillator) used by Bob for homodyne detection is covered by our security analysis.
Placing the trusted source in Alice's lab also implies that the analysis is not compatible with reverse reconciliation.

We calculate the correlation between $X_A$ and $X_B$ under the assumption of an identically and independently distributed source producing states with an input squeezing of $11$dB and antisqueezing of $16$dB. Squeezing at this level has been realized in an experiment at 1550nm~\cite{Eberle11}. Our noise model consists of loss and excess noise, where the latter is set to be $1\%$ as it is mainly due to the classical data acquisition (see Appendix \ref{App4:ErrorModel}). The leakage term is estimated assuming an error correction efficiency of $0.95$~\cite{Leverrier_log_dis}. In Fig.~\ref{fig:coherent} the resulting key rates $\ell/N$ are plotted for different symmetric losses. We have set security parameters $\epsilon_s=\epsilon_c=10^{-6}$. The optimization over the other free parameters is done numerically for each $N$. Typical values for $N=10^9$ are $k=10^8$, $\alpha=52$ and $\delta=0.01$.

\emph{Analysis for Collective Attacks}---\,
Under the assumption of collective attacks, the state between Alice, Bob, and Eve has tensor product structure, $\omega_{ABE}^{\otimes N}$, enabling statistical estimations of the covariance matrix of $\omega_{AB}$. However, we do not cover the statistical details here and simply introduce confidence sets $\mathcal{C}_{\epsilon_{pe}}$, which ensure that whenever the protocol does not abort the covariance matrix $\Gamma_{AB}$ of $\omega_{AB}$ lies in $\mathcal{C}_{\epsilon_{pe}}$ with probability at least $1-\epsilon_{pe}$.
Hence, we have to give a lower bound on the smooth min-entropy $H_{\min}^{\epsilon}(X_A|E)_{\omega^{\otimes n}}$ over all states with a covariance matrix $\Gamma_{AB}\in\mathcal{C}_{\epsilon_{pe}}$. The smooth min-entropy is evaluated on the classical quantum state $\omega_{X_AE}$ which is obtained from $\omega_{AB}$ by taking a purification $\omega_{ABE}$ and applying the discretized quadrature measurement on the $A$ system.

\begin{figure}[h]\begin{center}\includegraphics*[width=8.8cm]{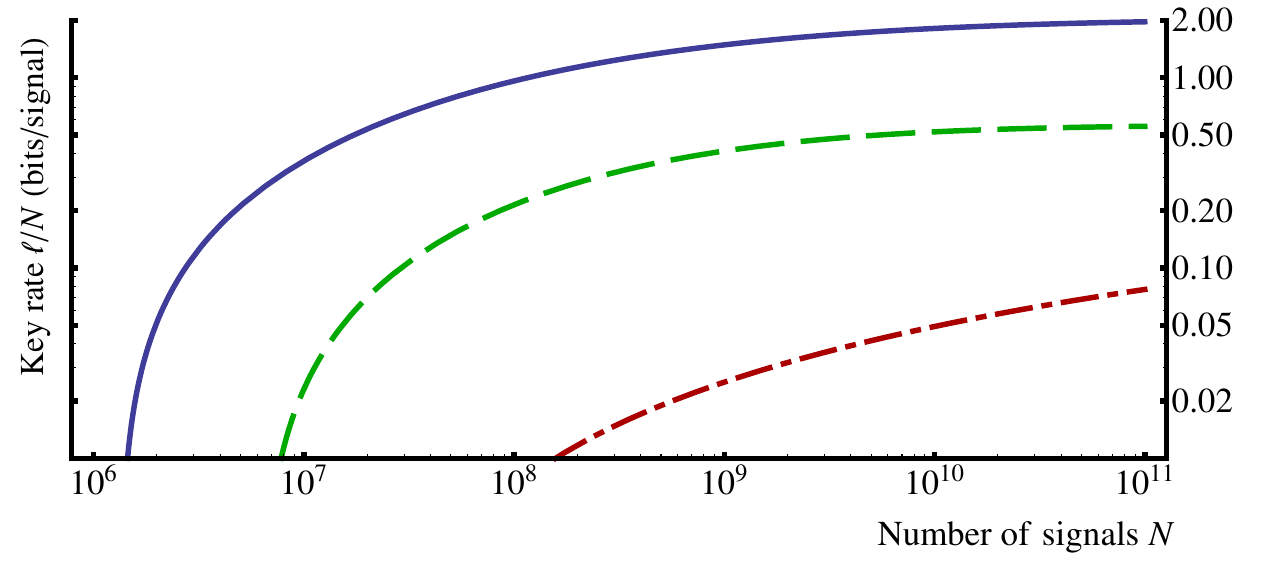}\caption{\label{fig:collective} Key rate $\ell/N$ against collective Gaussian attacks for losses of $0\%$ (solid line), $15\%$ (dashed line), $25\%$ (dash-dotted line). Squeezing strength and security parameters are chosen as in the case of coherent attacks.}\end{center}\end{figure}

We employ the quantum equipartition property of the smooth min-entropy~\cite{Tomamichel08} for infinite-dimensional systems~\cite{Furrer10}, stating that for large $n$, $H_{\min}^{\epsilon}(X_A|E)_{\omega^{\otimes n}}$ approaches the conditional von Neumann entropy $H(X_A|E)_{\omega}$. More precisely, we have
\begin{equation}\label{eq:AEP}
H_{\min}^{\epsilon}(X_A|E)_{\omega^{\otimes n}} \geq n\cdot H(X_A|E)_{\omega} - \sqrt{n}\cdot\Delta\ ,
\end{equation}
where $\Delta$ is a function of $\epsilon$, $\delta$ and $\alpha$ (see Appendix \ref{App5:AEP}). Using that the minimum of $H(X_A|E)_{\omega}$ over all states with a fixed covariance matrix $\Gamma_{AB}$ is attained for the corresponding Gaussian state $\omega^{\Gamma_{AB}}$ (see Appendix \ref{App6:Extremality} and \cite{Cerf2006}), we get the following formula for the key length.

\noindent
\emph{For parameters $k,\alpha,\delta$, an $(\epsilon_s\!+\!\epsilon_{pe})$-secret key of length}
\begin{align*}\label{KeyCollective}
 n\cdot\!\inf_{\Gamma\in\mathcal{C}_{\epsilon_{pe}}} \!\! H(X_A|E)_{\omega^{\Gamma}}-\sqrt{n}\cdot\Delta - \leakEC - O(\log \frac{1}{\epsilon_s\epsilon_c})
\end{align*}
\emph{can be extracted assuming collective attacks.}

\begin{figure}[h]\begin{center}\includegraphics*[width=8.8cm]{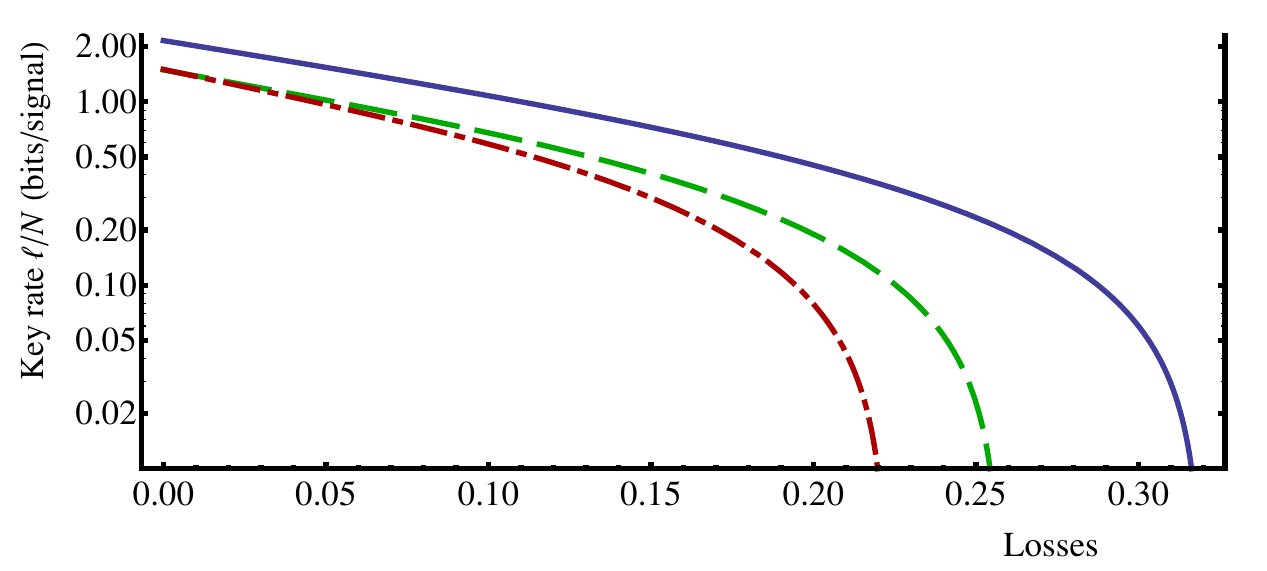}\caption{\label{fig:KeyRateVsLosses} Key rate versus losses secure against coherent attacks at $N=10^9$ (dot-dashed line), collective Gaussian attacks at $N=10^9$ (dashed line), and the Devetak-Winter rate~\cite{Devetak08012005} for perfect information reconciliation (solid line). Squeezing strength and security parameters are chosen as in the case of coherent attacks.}\end{center}\end{figure}

To evaluate this finite-key bound numerically, we need explicit expressions for the confidence sets. For this, we use results from~\cite{Leverrier2010}, which assume collective Gaussian attacks. We computed the key rates $\ell/N$ in Fig.~\ref{fig:collective} for the same squeezing strength and loss model as in the case of coherent attacks. The detailed calculation of $H(X_A|E)_{\omega^{\Gamma}} $ can be found in Appendix \ref{App7:Entropy}. For simplicity, we assumed a constant binning of $\delta$ over the entire outcome range ($\alpha= \infty$). In contrast to the case of coherent attacks, reverse reconciliation is possible and can increase the key rate essentially if asymmetric losses are assumed (which we do not discuss here). In Fig.~\ref{fig:KeyRateVsLosses}, we plotted the key rate for coherent and collective Gaussian attacks in dependence of the losses, and compare them with the Devetak-Winter rate~\cite{Devetak08012005,Berta11} for perfect error correction.

\emph{Discussion and Outlook.}---\,We provided a finite-key security analysis for a continuous variable QKD protocol and obtain a composable secure  positive key rate against coherent attacks for experimentally feasible parameters. 
The comparison with the finite-key rate against collective attacks shows that the gap is relatively small compared to the finite-size effects. This is due to the fact that the uncertainty relation is almost tight for the two-mode squeezed states. The reason that the key rates allow for only small amounts of losses is because of the direct reconciliation in the error correction protocol. Hence, an extension of the proof technique against coherent attacks to a reverse reconciliation error correction protocol would be desirable.
In order to relax the assumptions in the security proof against coherent attacks, it would be interesting to study the overlap for more realistic models of the quadrature measurements, which may include a continuum of modes. Moreover, our arguments might also be applicable to other CV QKD schemes \cite{Grosshans02,Weedbrook04}.

\emph{Acknowledgments.}---\,We thank R.~Renner for suggesting this work, and R.~Garc\'ia-Patr\'on and I. Cirac for helpful discussions. F.F acknowledges support from the LUH GRK 1463. T.F, V.B.S, and R.F.W acknowledge support from the DFG (grant WE-1240/12-1), BMBF project QuOReP, EU project Q-ESSENCE, and the research cluster QUEST. M.B is supported by the SNF (grant PP00P2-128455), and the DFG (grants CH 843/1-1 and CH 843/2-1). A.L., M.B, V.B.S and M.T are supported by the SNF through the National Centre of Competence in Research `Quantum Science and Technology'.

\begin{appendix}
\section{Smooth Min- and Max-Entropies}\label{App1:def_entropies}

For the sake of completeness, we give here a formal definition of the smooth conditional min- and max-entropies and present the basic properties used in the following. For a detailed discussion consult e.g.~\cite{koenig-2008, Tomamichel08, Tomamichel09} for the finite-dimensional case and \cite{Furrer10,Berta11} for the infinite-dimensional case. In the following, $\cH$ always denotes a separable Hilbert space and $\cS(\cH)$ the state space associated to $\cH$, which consists of all positive semi-definite trace class operators on $\cH$ with trace $1$. Furthermore, we define $\cS_{\leq}(\cH)$ to be the set of all non-normalized states, that is, positive semi-definite trace class operators with trace smaller or equal to $1$. We indicate different subsystems by labels and denote a state on $\cH_{AB}=\cH_A\otimes\cH_B$ by $\omega_{AB}$ and its reduced state on $\cH_A$ simply by $\omega_A$. Classical systems are denoted by $X,Y,Z$ and are described by embedding the classical degrees of freedom into a Hilbert space w.r.t.~a fixed orthonormal basis. This allows to read the following definitions for quantum as well as classical systems.

\begin{de}\label{Def:MinEntorpy}
For $\omega_{AB}\in\cS_{\leq}(\cH_{AB})$, we define the min-entropy of $A$ conditioned on $B$ as
\begin{equation*}
H_{\min}(A|B)_\omega = \sup_{\sigma_B\in \cS(B)} \sup\{ \lambda \in \mathbbm{R}|\, \omega_{AB} \leq 2^{-\lambda} \idty_A \otimes \sigma_B  \}\ .
\end{equation*}
\end{de}
The min-entropy of a classical-quantum state $\omega_{XB}$ characterizes the optimal guessing probability of the classical variable $X$ given the quantum system $B$~\cite{koenig-2008}.

The purified distance between two states $\omega,\rho\in\cS_{\leq}(\cH)$ is defined~\cite{Tomamichel09} as $\cP(\omega,\rho)=\sqrt{1-F(\omega,\rho)}$ where $F(\omega,\rho)=\big(\tr(\vert\sqrt{\omega}\sqrt{\rho}\vert)+\sqrt{(1-\mathrm{tr}[\sigma])(1-\mathrm{tr}[\rho])}\big)^2$ denotes the generalized fidelity.

\begin{de}\label{Def:SmoothEntorpies}
For $\omega_{AB}\in\cS_{\leq}(\cH_{AB})$ and $\epsilon \geq 0$, we define the $\epsilon$-smooth min-entropy of $A$ conditioned on $B$ as
\begin{equation}
H^\epsilon_{\min}(A|B)_\omega = \sup H_{\min}(A|B)_{\tilde\omega}\ .
\end{equation}
where the supremum is taken over all $\tilde\omega_{AB}\in\cS_{\leq}(\cH_{AB})$ with $\cP(\omega_{AB},\tilde\omega_{AB})\leq \epsilon$.
The $\epsilon$-smooth max-entropy of $A$ conditioned on $B$ is defined as
\begin{equation}\label{def:smoothMax}
H^\epsilon_{\max}(A|B)_\omega = -H^\epsilon_{\min}(A|C)_\omega\ .
\end{equation}
where $\omega_{ABC}$ is an arbitrary purification of $\omega_{AB}$.
\end{de}
One can show that the definition of the smooth max-entropy is independent of the choice of the purification. As for the min-entropy, we denote the non-smoothed version ($\epsilon=0$) of the max-entropy simply by $H_{\max}(A|B)_\omega$. The smooth max-entropy can also be expressed as the optimization of the max-entropy over $\epsilon$-close states, that is,
\begin{equation}\label{Eq:SmoothMax}
H^\epsilon_{\max}(A|B)_\omega = \inf H_{\max}(A|B)_{\tilde\omega}\ ,
\end{equation}
where the infimum is taken over all $\tilde\omega_{AB}\in\cS_{\leq}(\cH_{AB})$ with $\cP(\omega_{AB},\tilde\omega_{AB})\leq \epsilon$.
These entropies satisfy the data processing inequality saying that whenever the system B is manipulated with a quantum operation $\EE:\cS(\cH_B)\rightarrow \cS(\cH_C)$, the entropy can only increase
\begin{eqnarray}\label{eq:DataProcessingIneq}
\SHmin {A}{B}_{\omega} &\leq& \SHmin {A}{C}_{\id_A\otimes\EE(\omega)} \\
\SHmax {A}{B}_{\omega} &\leq& \SHmax {A}{C}_{\id_A\otimes\EE(\omega)}\ .\label{eq:DataProcessingIneqHmax}
\end{eqnarray}


\section{Derivation of the Uncertainty Relation}\label{App2:UncertaintyRelation}

Let us assume that the protocol parameters $\alpha$ and $\delta$ are fixed. For simplicity, we further assume that $M:=2\alpha/\delta$ is in $\mathbb N$. In the protocol Alice and Bob both measure the projectors of the quadrature measurements on the intervals $I_{1}=(-\infty,-\alpha +\delta]$, $I_{2}=(-\alpha+\delta,-\alpha + 2\delta]$, ..., $I_{M}=(\alpha-\delta,\infty)$. Let us denote the corresponding outcome alphabet by $\cX=\{1,2,,...,M\}$, which by definition is of size $|\cX|=2\alpha/\delta$. Let us introduce another partition of $\mathbb R$ into intervals $\{\tilde I_k\}_{k\in\mathbb N}$ of equal length $\delta$ such that $\tilde I_{k}=I_k$ for $k\in\cX\backslash \{1,M\}$. In the following we denote the projection onto the interval $I$ of the  spectrum of the phase and amplitude operator of Alice by $Q_A(I)$ and $P_A(I)$.

We can assume that they first distribute all the subsystems on which they perform the measurements. Let us denote the state shared between Alice, Bob and Eve on which they produce the sifted $N$ measurements by $\omega_{A^NB^NE}$, where $N$ denotes the number of subsystems. In the parameter estimation step they check that the average distance of the random sample of $k$ measurements $X^{pe}_A,X^{pe}_B\in\cX^k$ satisfies
\begin{equation}
d(X^{pe}_A,X^{pe}_B) \leq  d_0.
\end{equation}
Note that this test can be written as a projector $\Pi^{\pass}_k$ which only acts non-trivially on the $k$ subsystems used in the parameter step. If this condition holds, they pursue with the protocol otherwise they abort. Let us denote by $\omega_{A^nB^nE}$ the quantum state on the remaining $n$ subsystems conditioned on the event that the parameter estimation test passes.

Alice chooses now for each subsystem uniform at random between phase and amplitude measurements. This can be modeled by introducing a random variable $Z^n=(Z_1,...,Z_n)$ independently and identically distributed according to the uniform distribution, where $Z_i$ takes values $0$ or $1$ depending on whether Alice measures phase or amplitude in the ith run. Let us denote the uniform distribution over $\cZ^n=\{0,1\}^n$ by $u$ and by $\{ \ket {z^n}\}_{z^n\in\cZ^n}$ an orthonormal basis of a Hilbert space. The random measurement choice of Alice can now be modeled by introducing the state
\begin{equation}
\omega_{Z^nA^nB^nE}=\sum_{z^n\in\cZ^n}u(z^n) \ketbra {z^n}{z^n}\otimes \omega_{A^nB^nE} \; ,
\end{equation}
and the positive operator valued measure (POVM) $\{\Pi_A^{l^n}(z^n)\otimes \ketbra {z^n}{z^n}\}_{z^n\in\cZ^n,l^n\in\mathbb \cX^n}$ acting on $A^n$ and $Z^n$, where
\begin{equation}
\Pi_A^{l^n}(z^n)=\bigotimes_i\Pi_A^{l^n_i}(z^n_i) \; ,
\end{equation}
with $\Pi_A^{i}(0)=Q_A(I_{i})$ and $\Pi_A^{i}(1)=P_A(I_{i})$ for $i\in\cX$. Hence, $z^n_i$ determines whether phase or quadrature is measured. Let us denote the post-measurement state obtained by measuring the state $\omega_{A^nB^nEZ^n}$ by the POVM $\{\Pi_{k^n}(z^n)\otimes \ketbra {z^n}{z^n}\}$ by $\omega^n_{X_AB^nEZ^n}$. Here, $X_A$ takes values in $\cX^k$ and denotes the random variable which describes the distribution of the keys. Note that all parties are assumed to hold a copy of the variable $Z$ since the measurement choices have been revealed in the sifting phase.
Additionally, we introduce a similar POVM for the projections onto the spectrum of Alice's phase and amplitude measurements onto the intervals $\{\tilde I_i\}_{i\in\mathbb N}$ and denote them by
\begin{equation}
\tilde\Pi_A^{l^n}(z^n)=\bigotimes_i\tilde\Pi_A^{l_i}(z_i) \; ,
\end{equation}
where $\tilde\Pi_A^{i}(0)=Q_A(\tilde I_{i})$ and $\tilde \Pi_A^{i}(1)=P_A(\tilde I_{i})$ for $i\in\mathbb N$. The corresponding post-measurement state is denoted by $\tilde\omega^n_{ X_AB^nEZ^n}$. Note that here the distribution over $X_A$ can take values in $\mathbb N^n$.

The main idea in the security proof is to apply an uncertainty relation with quantum side information for the smooth min- and max-entropy~\cite{Tomamichel11}. For that, it is important that the measurement are maximally complementary. In the case of the POVM $\{\Pi_{k^n}(z^n)\otimes \ketbra {z^n}{z^n}\}$ this is a problem since the projectors of the phase and amplitude measurements onto the big intervals $I_1$ for $i=1,M$ almost commute. The idea is now that by trusting the source, we can estimate the (purified) distance between the states $\omega^n_{X_ABEZ^n}$ and $\tilde\omega^n_{X_ABEZ^n}$. For the state $\tilde\omega^n_{X_ABEZ^n}$, we can then obtain a non-trivial uncertainty relation since all projectors have only support on an interval of length $\delta$. In particular, it follows that
\begin{equation}\label{eq:uncertainty}
\SHmin {X_A}{EZ^n}_{\tilde\omega} \geq -n\log c - \SHmax {X_A}{B^nZ^n}_{\tilde\omega}
\end{equation}
where $c=\sup_{i,j} \Vert \sqrt{ Q_A(\tilde I_{i})}\sqrt{P_A(\tilde I_{j})}\Vert^2$. The inequality in this form is proven for the finite-dimensional case in~\cite[Corollary 7.6]{TomamichelPhD}. The generalization to infinite dimensions is straightforward by using the techniques from~\cite{Berta11}. It turns out the the overlap $c$ only depends on the length of the intervals and is given by
\begin{equation*}
c(\delta)= \frac{\delta^{2}}{2\pi}\cdot S_0^{(1)}(1,\frac{\delta^2}{4})^2 \, ,
\end{equation*}
where  $S_n^{(1)}(\cdot,u)$ denotes the radial prolate spheroidal wave function of the first kind (see~\cite{Kiukas10} and references therein).

In a next step, we estimate the distance between $\omega^n_{X_ABE\cZ}$ and $\tilde\omega^n_{X_ABE\cZ}$. According to the main text, we assume that the source produces a state which is independently and identically for each run. That is, the state has tensor product form $\omega_{A^N}=\omega_{A}^{\otimes N}$. Furthermore, if we set $\bar p_\alpha=1-p_\alpha$ we have by assumption that the source satisfies
\begin{equation}
 \tr \Big[ \omega_{ A} Q_A([-\alpha,\alpha])\Big] \geq \bar p_{\alpha},
\end{equation}
as well as $ \tr\big[\omega_{ A} P_A([-\alpha,\alpha])\big] \geq \bar p_{\alpha}$. Let us now define $\Lambda=\mathbb N \backslash \cX$ and for every $z^n\in [0,1]^n$ the projector
\begin{equation}
\Pi_A^\Lambda(z^n)=\sum_{l^n\in \Lambda}\tilde\Pi_A^{l^n}(z^n) \;
\end{equation}
which corresponds to the event where at least one of the quadrature measurements exceeds $\alpha$.  Since
\begin{equation*}
\omega_{A^n}= \frac{1}{\ppass}\tr_{A^kB^N}\Big(\Pi^{\pass}_k \omega_{A^NB^N}\Big)\leq \frac{1}{\ppass} \omega^{\otimes n}_A \;
\end{equation*}
with $p_{\mathrm{pass}}=1-p_{\mathrm{abort}}$, we obtain for every $z^n\in \cZ^n$
\begin{equation}
 \tr \Big[ \omega_{ A^n}\Pi_A^\Lambda(z^n)\Big] \leq \frac{1-\bar p_{\alpha}^n}{\ppass}.
\end{equation}
We can now bound the fidelity for a fixed $z^n\in\cZ^n$ by
\begin{align*}
F(\omega^{z^n}_{X_AB^nE},\tilde\omega^{z^n}_{X_AB^nE}) &\geq (1- \tr \Big[ \omega_{ A^n}\Pi_A^\Lambda(z^n)\Big] )^2\\
& \geq 1- 2\tr \Big[ \omega_{ A^n}\Pi_A^\Lambda(z^n)\Big]  \\
& \geq 1 - 2\frac{1-\bar p_{\alpha}^n}{\ppass} \; ,
\end{align*}
where $\omega^{z^n}_{X_AB^nE}$ and $\tilde\omega^{z^n}_{X_AB^nE}$ denote the normalized states conditioned on the event $z^n$.
Since now the fidelity between $\omega^n_{X_AB^nEZ^n}$ and $\tilde \omega^n_{X_AB^nEZ^n}$ is just the average over $z^n\in\cZ^n$, we obtain by the definition of the purified distance (see Section~\ref{App1:def_entropies})
\begin{equation}\label{eq:DistStates}
\cP(\omega^n_{X_AB^nEZ^n},\tilde \omega^n_{X_AB^nEZ^n}) \leq {\frac{f(n,p_\alpha)}{\sqrt{\ppass}}} \; ,
\end{equation}
where $f(p_\alpha,n)=\sqrt{2(1-(1-p_\alpha)^n)}$.

The bound in~(\ref{eq:DistStates}) can now be used to bound the smooth min- and max-entropy by
\begin{align*}\label{eq:MinEntropyEpsilon}
H_{\min}^{\epsilon+ \tilde\epsilon}(X_A|EZ^n)_{{\omega}} &\geq H_{\min}^{\epsilon}(X_A|EZ^n)_{\tilde \omega} \\
	-H_{\max}^{\epsilon+\tilde\epsilon'}(X_A|B^nZ^n)_{\tilde\omega} &\geq - H_{\max}^{\epsilon}(X_A|B^nZ^n)_{{\omega}} \; ,
\end{align*}
where $\tilde\epsilon=f(p_\alpha,n)/\sqrt{\ppass}$. We simply used the Definition~\ref{Def:SmoothEntorpies} and the fact that the purified distance can only decrease by tracing out a subsystem. In combination with the uncertainty relation in~(\ref{eq:uncertainty}), we arrive at
\begin{equation*}\label{eq:uncertaintyQKD}
H_{\min}^{\epsilon+ 2\tilde\epsilon}(X_A|EZ^n)_{{\omega}} \geq -n\log c(\delta) - H_{\max}^{\epsilon}(X_A|B^nZ^n)_{{\omega}} \;.
\end{equation*}
Applying the data processing inequality to the max-entropy $ H_{\max}^{\epsilon}(X_A|B^nZ^n)_\omega \leq H_{\max}^{\epsilon}(X_A|X_B)_\omega$, we obtain the final uncertainty relation used in Equation~(3) in the main text. Note that we assumed in the main text that $Z^n$ is included in $E$.


\section{Statistical Analysis for Coherent Attacks}\label{App3:Statistics}

The goal is to show that if the protocol does not abort and thus, satisfies $d(X^{pe}_A,X^{pe}_B) \leq  d_0$, the smooth max-entropy in Equation~(3) can be bounded by
\begin{equation}\label{BoundHmax}
 H_{\max}^{\epsilon'}(X_A|X_B)_{\omega}\leq n \log \gamma(d_0 + \mu_0)\ ,
\end{equation}
where
\begin{equation*}
\gamma(t)= (t+\sqrt{1+t^2})\Big(\frac{t}{\sqrt{1+t^2}-1}\Big)^{t}\ ,
\end{equation*}
and
\begin{equation}
\mu_0=|\cX|\sqrt{\frac{N(k+1)}{nk^2}\log\frac{1}{\epsilon_s/4 - 2f(p_\alpha,n)}}\; .
\end{equation}
Note that the alphabet $\cX$ satisfies $|\cX|=\lceil2\frac{\alpha}{\delta}\rceil$ and $\epsilon'={\epsilon_s}/({4\ppass}) - 2f(p_\alpha,n)/\sqrt{\ppass}$~\footnote{The value of $\epsilon$ in Equation~(4) can be chosen as ${\epsilon_s}/({4\ppass})$ and that the other term $2f(p_\alpha,n)/\sqrt{\ppass}$ comes from the uncertainty relation derived in Sectoin~\ref{App2:UncertaintyRelation}.}. The proof is divided into two steps and follows closely the arguments in~\cite{Lim11}. First we derive a bound on the smooth max-entropy, and then we estimate the probability that $d(X_{A},X_B)\geq d(X_A^{pe},X_B^{pe})+\mu$.

\begin{prop}\label{prop:MaxEntropyBound}
Let $\cX$ be a finite alphabet, $\PP(x,x')$ a probability distribution on $\cX^n \times\cX^n$ for some $n\in\mathbb{N}$, $d_0>0$ and $\epsilon>0$. If $\PP$ satisfies $\prob{d(x,x')\geq d_0}{\PP} \leq \epsilon^2$, then
\begin{equation*}
\SHmax X{X'}_{\PP} \leq n \log \gamma(d_0)\ ,
\end{equation*}
where
\begin{equation*}
\gamma(t) = (t+\sqrt{1+t^2})\Big(t/[\sqrt{1+t^2}-1]\Big)^{t}\ .
\end{equation*}
\end{prop}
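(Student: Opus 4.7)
The plan is to reduce the bound to a counting argument on $\ell_1$-balls in $\mathbb{Z}^n$, via a smoothing step that restricts to the high-probability event $A = \{(x,x') : d(x,x')<d_0\}$ and a classical Chernoff/exponential-tilting bound from which the function $\gamma(d_0)$ emerges naturally.

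For the smoothing, I take the subnormalized distribution $\tilde\PP = \PP\cdot\mathbbm{1}_A$, which has trace $\PP[A] \geq 1-\epsilon^2$. The generalized fidelity of Appendix~\ref{App1:def_entropies} yields $F(\PP,\tilde\PP) = (1-\epsilon^2)^2$ (the defect term vanishes since $\PP$ is normalized), so $\cP(\PP,\tilde\PP) \leq \epsilon\sqrt{2-\epsilon^2}$, which can be absorbed into a smoothing parameter of size $\epsilon$ up to a harmless constant. By the variational formula~\eqref{Eq:SmoothMax}, $\SHmax{X}{X'}_{\PP} \leq H_{\max}(X|X')_{\tilde\PP}$.

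Next I bound $H_{\max}(X|X')_{\tilde\PP}$ by a conditional support size. In the classical-classical case, combining the variational formula for $H_{\max}$ with Cauchy-Schwarz gives the standard estimate $H_{\max}(X|X')_{\tilde\PP} \leq \log \max_{x'}|\{x:\tilde\PP(x,x')>0\}|$. Since $\tilde\PP$ is supported on $A$, for each fixed $x'$ this set lies inside the integer $\ell_1$-ball $B(x',d_0 n) = \{x\in\mathbb{Z}^n : \sum_i|x_i - x'_i| < d_0 n\}$, and it suffices to show $|B(x',d_0 n)| \leq \gamma(d_0)^n$.

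For the ball-size bound I use exponential tilting. Consider the two-sided geometric law on $\mathbb{Z}$, $\mu(y) = q^{|y|}/Z$ with $Z = (1+q)/(1-q)$, and choose $q\in(0,1)$ so that $\mathbb{E}_\mu[|y|] = d_0$; solving $2q/(1-q^2) = d_0$ gives $q = (\sqrt{1+d_0^2}-1)/d_0$ and $Z = d_0+\sqrt{1+d_0^2}$. Then every $y \in\mathbb{Z}^n$ with $\sum_i|y_i|\leq d_0 n$ satisfies $\mu^{\otimes n}(y) = q^{\sum|y_i|}/Z^n \geq q^{d_0 n}/Z^n$, so the count is at most $Z^n/q^{d_0 n} = \gamma(d_0)^n$ after observing that $t/(\sqrt{1+t^2}-1) = 1/q$ by rationalization. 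The main obstacle is the bookkeeping in the smoothing step (precisely matching the $\epsilon^2$ tail to the smoothing parameter $\epsilon$) and the classical $H_{\max}$ support bound; both are standard but require careful application of the definitions in Appendix~\ref{App1:def_entropies}.
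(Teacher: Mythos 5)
Your argument follows the paper's proof essentially step for step: restrict to the event $A=\{d(x,x')<d_0\}$, bound the smoothed max-entropy by the conditional support size $H_0(X|X')$, and count lattice points in the $\ell_1$-ball by an exponential-tilting bound. Your explicit two-sided geometric law with mean $d_0$ is precisely the optimizer of the paper's Chernoff bound (set $q=e^{-\lambda}$), and the identity $Z/q^{d_0}=\gamma(d_0)$ checks out. The one place where your write-up falls short of the statement is the smoothing step: with the subnormalized restriction $\tilde\PP=\PP\cdot\mathbbm{1}_A$ the generalized fidelity is $F(\PP,\tilde\PP)=\PP[A]^2$, giving $\cP(\PP,\tilde\PP)\leq\epsilon\sqrt{2-\epsilon^2}>\epsilon$, so $\tilde\PP$ need not lie in the $\epsilon$-ball over which $H^\epsilon_{\max}$ is infimized and the inequality $\SHmax{X}{X'}_{\PP}\leq H_{\max}(X|X')_{\tilde\PP}$ does not follow; this cannot be waved off as ``a harmless constant'' because the proposition is exact in $\epsilon$. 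The paper avoids this by using the \emph{normalized} conditional distribution $\QQ=\PP\,\mathbbm{1}_A/\PP[A]$, for which $F(\PP,\QQ)=\PP[A]$ and hence $\cP(\PP,\QQ)=\sqrt{\PP[A^c]}\leq\epsilon$ exactly. With that one-line repair your proof is complete and identical in substance to the paper's.
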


\begin{proof}
We first note that the smooth max-entropy is obtained by taking the infimum over non-smooth max-entropies of all states which are $\epsilon$-close in purified distance~(\ref{Eq:SmoothMax}). Let us define the probability distribution
\begin{equation*}
\QQ(x,x') = \begin{cases} \frac{\PP(x,x')}{\prob{d(x,x')\leq d_0}{\PP}}, & \mbox{if } d(x,x')\leq d_0
\\ 0 , & \mbox{else }  \end{cases}
\end{equation*}
and note that $F(\PP,\QQ)=\prob{d(x,x')\leq d_0}{\PP}$. Hence, it follows that $\cP(\PP,\QQ)=\sqrt{\prob{d(x,x')\geq d_0}{\PP}}\leq \epsilon$. Using that the $0$-R\'enyi-entropy is bigger than the max-entropy~\cite{Tomamichel10}, we obtain
\begin{equation*}
\SHmax X{X'}_{\PP} \leq \Hmax X{X'}_{\QQ} \leq H_0(X|X')_{\QQ}\ .
\end{equation*}
The conditional 0-R\'enyi entropy of the distribution $\QQ$ is then given by~\cite[Remark 3.1.4]{Renner_Phd}
\begin{align*}
H_0(X|X')_{\QQ} & = \max_{x'}\log |\{x\in\cX^n \ ; \; \QQ(x,x')\neq 0\}| \\
& \leq \log |\{x\in\mathbb Z ^n \  ; \; \sum_{i=1}^n|x_i|\leq nd_0 \}|\ .
\end{align*}
For any $\lambda>0$ we estimate
\begin{align*}
|\{x\in\mathbb Z ^n \  ; \; \sum_{i=1}^n|x_i|\leq nd_0 \}| & \leq \sum_{x\in\mathbb Z^n} \exp[\lambda (nd_0-\sum_{i=1}^n|x_i|)]  \\
&= e^{\lambda n d_0} \Big(\sum_{z\in\mathbb Z}e^{-\lambda |z|}\Big)^n\\
& = \Big(e^{\lambda d_0}\frac{1+e^{-\lambda}}{1-e^{-\lambda}}\Big)^n\ .
\end{align*}
By optimizing over $\lambda>0$, one finds that  $|\{x\in\mathbb Z ^n \  ; \; \sum_{i=1}^n|x_i|\leq nd_0 \}| \leq \gamma(d_0)^n$. This completes the proof.
\end{proof}

Now, we have to estimate the probability that $d(X_{A},X_B)  \geq d(X_A^{pe},X_B^{pe}) + \nu$ conditioned on the event that the protocol does not abort. Since the probability that the protocol passes is $p_{\mathrm{pass}}$, we find according to Bayes' theorem that
\begin{align*}
&\pr{d(X_{A},X_B) \geq d(X_A^{pe},X_B^{pe}) + \nu | ``\mathrm{pass}"}\\
&\leq \frac{1}{p_{\mathrm{pass}}}\pr{d(X_{A},X_B) \geq d(X_A^{pe},X_B^{pe}) + \nu}\ .
\end{align*}
Deriving a bound on $\pr{d((X_{A},X_B) \geq d(X_A^{pe},X_B^{pe}) + \nu}$ is a standard problem from random sampling without replacement. We have that $X_A^{pe},X_B^{pe}\in\cX^k$ is a random sample of all measurements $X_A^{tot},X_B^{tot}\in\cX^N$. The quantity of interest is $|x_A^i-x_B^i|$, where $x_A^i\in X_A^{tot}$ and $x_B^i\in X_B^{tot}$. For this we denote the population mean by $d_{tot}=d(X_A^{tot},X_B^{tot})$, the sample mean by $d_{pe}=d(X_A^{pe},X_B^{pe})$, and for the raw key $d_{key}=d(X_{A},X_B)$. Note that these are related via
\begin{equation}\label{eq:sample1}
Nd_{tot} = kd_{pe} + nd_{key}\ .
\end{equation}

We consider the runs of the protocol as a probabilistic process and treat $d_{tot}$ as a random variable. We first use the bound from~\cite{Serfling74} to obtain
\begin{equation*}
\pr{d_{key} \geq a + \nu|d_{tot}=a}\leq e^{-2n\nu^2 \frac{N}{|\cX|^2(k+1)}}\ ,
\end{equation*}
which is independent of $a$. Here, we used that the maximal value of $|x_A^i-x_B^i|$ is given by $|\cX|$. Using Eq.~\eqref{eq:sample1}, we can compute
\begin{eqnarray*}
\pr{d_{key}\geq d_{pe} + \nu} = \pr{d_{key}\geq d_{tot} + \frac{k}{N}\nu} \\
= \sum_{a} \pr{d_{tot}=a} \cdot \pr{d_{key} \geq a + \frac{k}{N} \nu|d_{tot}=a}  \\
\leq e^{-2\nu^2 \frac{nk^2 }{|\cX|^2N(k+1)}}\ .
\end{eqnarray*}
Hence, together with Proposition~\ref{prop:MaxEntropyBound} and the fact that the protocol aborts for $d(X_A^{pe},X_B^{pe})>d_0$, we arrive at
\begin{equation*}
\SHmax X{X'}_{\PP} \leq  n \log \gamma(d_0+\nu)
\end{equation*}
for
\begin{equation*}\label{eq:SamplePart}
\nu=|\cX|\sqrt{\frac{N(k+1)}{nk^2}\log\frac{1}{\epsilon\cdot\sqrt{p_{\mathrm{pass}}}}}\ .
\end{equation*}
In the protocol, we have to bound the smooth max-entropy for a smoothing parameter $\epsilon'=\epsilon=\frac{\epsilon_s}{4\ppass} - 2f(p_\alpha,n)/\sqrt{\ppass}$. Since $\ppass\leq 1$, we can bound $\nu\leq \mu$ and obtain the bound in Equation~(\ref{BoundHmax}).


\section{The Error Model}\label{App4:ErrorModel}

We consider a symmetric two parameter error model, using the loss $\mu_{\mbox{loss}}$ and excess noise $\mu_{\mbox{en}}$. The loss is our main source of noise and is equivalent to replacing a certain amount of signal by vacuum. The excess noise corresponds to a classical noise added by the data acquisition system and can in principle be made arbitrary small by using appropriate equipment. Both effects are gaussian noise sources and are expressed as action on the covariance matrix by $\Gamma \rightarrow (1-\mu_{\mbox{loss}}) \Gamma + (\mu_{\mbox{loss}} + \mu_{\mbox{en}})  \Gamma_{\mbox{vac}}$, where $\Gamma_{\mbox{vac}}$ denotes the covariance matrix of the vacuum state.


\section{Asymptotic Equipartition Property}\label{App5:AEP}

We use~\cite[Proposition 8]{Furrer10}, which states that for $\epsilon>0$, $n\geq\frac{8}{5}\log\frac{2}{\epsilon^{2}}$, and any quantum state $\omega_{AB}$ for which $H(A)_\omega$ is finite, we have
\begin{align*}
&H_{\min}^{\epsilon}(A|B)_{\omega^{\otimes n}}\geq n\cdot H(A|B)_{\omega}-\sqrt{n}\cdot\\
&4\log(2^{-\frac{1}{2}H_{\min}(A|B)_{\omega}}+2^{\frac{1}{2}H_{\max}(A|B)_{\omega}}+1)\sqrt{\log\frac{2}{\epsilon^{2}}}\ .
\end{align*}
In our case, we are interested in the classical quantum state $\omega_{X_AE}$ for which $H(X_A)_\omega$ is finite and the formula applies.
Let us now simplify the last term in the above inequality. Let $\omega_{X_AEC}$ be an arbitrary purification of $\omega_{X_AE}$, we have according to the definition of the max-entropy~(\ref{def:smoothMax})
\begin{equation*}
-H_{\min}(X_A|E)_{\omega}= H_{\max}(X_A|C)_{\omega} \leq H_{\max}(X_A)_{\omega}
 \end{equation*}
where the last inequality is due to the data processing inequality~(\ref{eq:DataProcessingIneqHmax}). Furthermore, we can also use the data processing inequality~(\ref{eq:DataProcessingIneqHmax}) to bound the max-entropy $H_{\max}(X_A|E)_{\omega} \leq H_{\max}(X_A)_{\omega}$. Using this two estimations, we obtain
\begin{align*}
&2^{-\frac{1}{2}H_{\min}(X_A|E)_{\omega}}+2^{\frac{1}{2}H_{\max}(X_A|E)_{\omega}} \leq 2^{\frac{1}{2}H_{\max}(X_A)_{\omega}+1}\ .
\end{align*}
Hence, we finally arrive at
$H_{\min}^{\epsilon}(X_A|E)_{\omega^{\otimes n}}\geq n\cdot H(X_A|E)_{\omega}-\sqrt{n}\cdot\Delta$
with
\begin{align}
\Delta=4\log(2^{\frac{1}{2}H_{\max}(X_A)_{\omega}+1}+1)\sqrt{\log\frac{2}{\epsilon^{2}}}\ ,
\end{align}
which is used in (4) of the main paper. Note that $\Delta$ only depends on $\epsilon$ and the measurement distribution on Alice's side. Since we assume in our setup a known source in Alice's lab this can be directly calculated.


\section{Gaussian Extremality}\label{App6:Extremality}

In the following we show that the infimum $\inf_{\omega} H(X_A|E)_{\omega}$ taken over all states $\omega_{AB}$ with covariance matrix $\Gamma$ is attained for the Gaussian representative. Even though the argument is in analogy to~\cite{Cerf2006}, we give it here for the sake of completeness. See also~\cite{Navascues06} for a similar result.

The main tool is the result from~\cite{Wolf06} which classifies functions which are optimized by Gaussian states. In particular, if one can show that a function $f(\omega_{AB})$ is (i) lower semi-continuous in trace norm, (ii) invariant under local unitary transformations, and (iii) strongly superadditive, i.e.~$f(\omega_{ABA'B'})\geq f(\omega_{AB})+f(\omega_{A'B'})$ where equality holds if $\omega_{ABA'B'}=\omega_{AB}\otimes\omega_{A'B'}$, then it follows that $f(\omega_{AB})\geq f(\omega_{AB}^{\Gamma})$. Here, $\omega_{AB}^{\Gamma}$ denotes the Gaussian representative of the family of states with same covariance matrix $\Gamma$.

Consider now the function $f(\omega_{AB})=H(X|E)_{\omega}$ where $\omega_{ABE}$ is a purification of $\omega_{AB}$ and $\omega_{XBE}$ is obtained by applying the measurement used in our protocol on the A system. The conditional von Neumann entropy is defined in accordance with~\cite{Kuznetsova}, that is, $H(A|B)_\rho=H(A)_\rho - H(\rho_{AB} ||\rho_A\otimes\rho_B)$ where $H(\rho||\sigma)$ denotes the relative entropy. In this definition we require that $H(A)_\rho$ is finite. Note that the classical alphabet $\cX$ on which $\omega_{X}$ is defined is finite such that $H(X)_\omega$ is always finite and the conditional entropy is well-defined. Because $0\leq H(X|B)_{\rho} \leq H(X)_{\rho}\leq \log |\cX|$ holds for any finite-dimensional B systems, we obtain the same result for infinite-dimensional Hilbert spaces via the finite-dimensional approximation property of the conditional von Neumann entropy as shown in~\cite{Kuznetsova}.

We show now that $f(\omega_{AB})=H(X|E)_{\omega}$ satisfies the properties (i)-(iii) from which the extremality of the Gaussian state follows. The properties (i) and (ii) are obtained in a similar way as in~\cite{Cerf2006}. In order to show property (iii) one takes a purification $\omega_{ABA'B'E}$ of $\omega_{ABA'B'}$, which is of course also a purification of $\omega_{AB}$ and $\omega_{A'B'}$. The following chain of inequalities for the von Neumann entropies
\begin{align*}
H(XX|E)_\omega  = & \; H(X|X'E)_\omega + H(X'|XE)_\omega \\
&   \, + I(X:X'|E)_\omega \\
\geq & \; H(X|A'B'E) + H(X|ABE)
\end{align*}
holds for finite-dimensional systems due to $ I(X:X'|E)_\omega\geq 0$ and since $X$ ($X'$) is obtained from $AB$ ($A'B'$) via a trace preserving completely positive map. But this can be lifted to infinite-dimensions via the finite-dimensional approximation property~\cite{Kuznetsova} as the entropies are all finite. Hence, we obtain the strong superadditivity
\begin{align*}
f(\omega_{ABA'B'}) &= H(XX|E)_\omega \\
&\geq H(X|A'B'E) + H(X|ABE) \\
&= f(\omega_{AB}) + f(\omega_{A'B'}) \, .
\end{align*}
The equality in the case of $\omega_{AB}\otimes\omega_{A'B'}$ follows from the additivity of the von Neumann entropy.


\section{Calculation of $H(X_A|E)$ for Discretized Measurements}\label{App7:Entropy}

In order to compute the bound on the key length secure against collective attacks as given in the main paper, we have to calculate $H(X_A|E)_{\omega}$ for a two mode squeezed Gaussian state $\omega_{AB}$. For the proper definition and the properties of conditional von Neumann entropies for infinite-dimensional systems, we refer to~\cite{Kuznetsova}. Let $\omega_{ABC}$ be a Gaussian purification of $\omega_{AB}$ with $\omega_E$ a two mode Gaussian state. We first rewrite the entropy as
\begin{align*}
H(X_A|E)_{\omega} & = H(X_AE)_{\omega}-H(E)_{\omega} \\
& = H(E|X_A)_{\omega}+H(X_A)_{\omega} - H(AB)_{\omega}\ ,
\end{align*}
where we used that $\omega_{ABE}$ is pure and therefore $H(E)=H(AB)$. Note also that in our case the alphabet $\cX$ is finite. Since $\omega_{AB}$ is a two mode Gaussian state the entropy $H(AB)_{\omega}$ is just a function of the symplectic invariants and can be calculated~\cite{serafini04}.

For the computation of the other entropies, we assume for simplicity that the correlations in amplitude and phase are symmetric, and do the calculation for the amplitude measurement with corresponding operator denoted by $X$. The measurement operators for a projection onto the interval $I_k$, $k\in\cX$, are described by $E_k=\mu_{x}(I_k)$, where $\mu_{x}$ is the spectral measure of $X$. The post-measurement states are then given by $\omega^k_{ABE}=1/p_k(E_k \omega_{ABE}E_k^{\dagger})$, where $p_k=\tr\left[\omega_{ABE}E_k\right]$. The entropy $H(X_A)_{\omega}$ is the Shannon entropy of the classical distribution $\{p_k\}$.

Let us turn to the estimation of $H(E|X_A)_{\omega}$. First, we note that
\begin{equation*}\label{CondEntr}
H(E|X_A)_{\omega}=\sum_{k}p_k H(E)_{\omega^k}\ ,
\end{equation*}
which reduces the problem to calculate $H(E)_{\omega^k}$ for every $k\in\cX$. For that we introduce the normalized post measurement state $\omega_{BE}(x)$ conditioned that Alice measures the amplitude $x\in\mathbb R$. Furthermore, we denote by $p(x)$ the probability that Alice measures $x$. Since $\omega_{AE}$ is a Gaussian state, one can show that $\omega_{E}(x)= {\rm U} (v(x))\omega_E(0) {\rm U}(v(x))^\dagger$, where ${\rm U}(v)$ denotes the Weyl operator which corresponds to a phase space translation and $v$ is a continuous function which depends on $\Gamma_{AE}$. Hence, we obtain that $H(E)_{\omega(x)}=H(E)_{\omega(0)}$ for all $x$.


\begin{prop}\label{prop:collective}
Let $\omega_{AB}$ be a two mode squeezed Gaussian state, $\omega_{ABE}$ a Gaussian purification, and $\omega_{BE}(x)$ and $\omega_{BE}^k$ as defined above. Then, it follows that $H(E)_{\omega^k} \geq H(E)_{\omega(0)}$ and, thus, $H(E|X_A)_{\omega} \geq H(E)_{\omega(0)}$.
\end{prop}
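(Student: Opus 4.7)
The plan is to view the binned post-measurement state $\omega_E^k$ as a convex mixture of the sharp post-measurement states $\omega_E(x)$ for $x\in I_k$, and then invoke concavity of the von Neumann entropy together with the unitary equivalence already established above. Concretely, after the parties perform the discretized amplitude measurement with outcome $k$, Eve's conditional state is obtained by further conditioning on the continuous value $x\in I_k$ and averaging. Writing $p_k=\int_{I_k}p(x)\,dx$, I would establish the integral decomposition
\begin{equation*}
\omega_E^k \;=\; \frac{1}{p_k}\int_{I_k} p(x)\,\omega_E(x)\,dx,
\end{equation*}
which follows formally from the fact that the spectral projector $E_k=\mu_x(I_k)$ is the integral of the infinitesimal projectors associated with the sharp amplitude measurement, applied to a purification of $\omega_{AE}$ and then tracing out the $A$-system.

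Given this representation, the first step is to apply concavity of the von Neumann entropy to the mixture:
\begin{equation*}
H(E)_{\omega^k} \;\geq\; \frac{1}{p_k}\int_{I_k} p(x)\,H(E)_{\omega(x)}\,dx.
\end{equation*}
The second (trivial) step is to use the unitary relation $\omega_E(x)=U(v(x))\,\omega_E(0)\,U(v(x))^{\dagger}$, so that $H(E)_{\omega(x)}=H(E)_{\omega(0)}$ for every $x$. The integrand is then constant, and the inequality collapses to $H(E)_{\omega^k}\geq H(E)_{\omega(0)}$. The final bound on the conditional entropy is obtained by averaging over $k$:
\begin{equation*}
H(E|X_A)_{\omega} \;=\; \sum_k p_k\,H(E)_{\omega^k} \;\geq\; \sum_k p_k\,H(E)_{\omega(0)} \;=\; H(E)_{\omega(0)}.
\end{equation*}

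The main technical point that needs care is the application of concavity in the infinite-dimensional, continuous-mixture setting, since $\omega_E(x)$ is a family of Gaussian states on a separable Hilbert space and the convex combination is an integral rather than a finite sum. I would justify this either by approximating $I_k$ by a finite partition and using finite concavity together with lower semi-continuity of $H$ (as established for conditional entropies in~\cite{Kuznetsova}), or by appealing directly to the integral Jensen inequality for the (concave, lower semi-continuous) von Neumann entropy on the set of density operators with finite entropy. The fact that $\omega_E(0)$ is Gaussian with finite covariance guarantees $H(E)_{\omega(0)}<\infty$, so no integrability issues arise and the finite-dimensional truncation argument used elsewhere in the paper carries over. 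This will be the only step that is not essentially immediate; the rest of the argument is pure bookkeeping about the structure of Gaussian conditional states.
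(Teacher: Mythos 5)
Your overall route is the same as the paper's: write $\omega_E^k=\frac{1}{p_k}\int_{I_k}p(x)\,\omega_E(x)\,dx$, use the Weyl-covariance $\omega_E(x)=U(v(x))\omega_E(0)U(v(x))^{\dagger}$ to make the integrand's entropy constant, apply concavity, and average over $k$. You also correctly isolate the one non-trivial step, namely justifying concavity for a continuous mixture of states on a separable Hilbert space. However, of the two justifications you offer, the first one does not work as stated: lower semi-continuity of the von Neumann entropy gives $H(\omega_E^k)\leq\liminf_l H(\rho_E^l)$ for trace-norm approximants $\rho_E^l\to\omega_E^k$, which is the \emph{wrong} direction. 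Finite concavity yields $H(\rho_E^l)\geq H(\omega_E(0))$ for each $l$, so to conclude $H(\omega_E^k)\geq H(\omega_E(0))$ you need $H(\omega_E^k)\geq\lim_l H(\rho_E^l)$, i.e.\ (upper semi-)continuity along the approximating sequence. The paper supplies exactly this missing ingredient: it verifies that $\tr[\rho_E^l(Q_E^2+P_E^2)]\to\tr[\omega_E^k(Q_E^2+P_E^2)]$ with uniformly bounded energy and invokes the continuity of the entropy on energy-bounded sets~\cite{Wehrl78}, which is where the Gaussianity of $\omega_E(x)$ is actually used beyond the covariance property. Your second suggested justification (an integral Jensen inequality, e.g.\ via $H(\omega_E^k)-\frac{1}{p_k}\int_{I_k}p(x)H(\omega_E(x))\,dx=\frac{1}{p_k}\int_{I_k}p(x)\,D(\omega_E(x)\|\omega_E^k)\,dx\geq 0$) can be made rigorous and would be a legitimate alternative, but it also requires checking finiteness of the terms involved, which again rests on the energy bound; so either way the energy-boundedness argument, absent from your write-up, is the step that must be supplied.
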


\begin{proof}
The proof exploits the concavity of the von Neumann entropy and the fact that the state $\omega_{E}^k$ can be approximated in trace class by a finite convex combination of states $\omega_{E}(x)$. Note that we can write $\omega_{E}^k=1/p_k \int_{I_k} p(x)\omega_{E}(x){\rm dx}$ where the integral converges weakly. As discussed above we also have the relation $\omega_{E}(x)={\rm U}(v(x)) \omega_{E}(0){\rm U}(v(x))^{\dagger}$. Since ${\rm U}(v)$ is strongly continuous in $v$, we have $x\mapsto \omega_{BE}(x)$ and, thus, $x\mapsto \omega_{E}(x)$ are trace class continuous. Hence, we know that the Lebesgue integral $\int_{I_k}p(x)\omega_{E}(x) {\rm dx}$ converges even in trace norm, and furthermore, it is equal to the Riemann integral. So we can approximate $\omega^k_E$ in trace norm via step functions
\begin{equation*}
\rho_E^l=\frac{1}{p_k}\sum_{j=1}^{N_l}p(x^l_j)|J^l_j|\omega_E(x^l_j)\ ,
\end{equation*}
where it holds for all $l$ that $I_k=\bigcup_j J^l_j$, the $x_j^l\in J_j^l$ are chosen such that $\sum_{j=1}^{N_l}p(x^l_j)|J^l_j|=p_k$, and $\sup_j|J_j^l|\rightarrow 0$ for $l\rightarrow \infty$. Furthermore, as $\omega_E(x)$ is a Gaussian state, we have that for $H = Q_E^2+ P_E^2$ the expectation value $\tr\left[\omega_E(x) H\right]$ is bounded and continuous in $x$, so $\tr\left[\rho_E^l H\right]\rightarrow\tr\left[\omega_E^kH\right]$ for $l\rightarrow \infty$~\footnote{We also use that $x<\infty$ for $x\in I_k$ since $I_k\subset \mathbb R$ for all $k$.}. Using that the von Neumann entropy is continuous for sequences of states with finite energy~\cite{Wehrl78}, we find that $H(E)_{\omega^k}=\lim_{l\rightarrow \infty} H(\rho_E^l)$, and thus,
\begin{align*}
H(\omega_E^k) &=\lim_{l\rightarrow \infty} H(\rho_E^l) \\
&\geq \lim_{l\rightarrow \infty}  \frac{1}{p_k}\sum_{j=1}^{N_l}p(x^l_j)|J^l_j|H(\omega_E(x^l_j))= H(\omega_E(0))\ .
\end{align*}
The inequality is due to the concavity of the von Neumann entropy~\cite{Wehrl78} and the last equality holds because $H(\omega_E(x))$ is independent of $x$.
\end{proof}

Using this proposition we finally get
\begin{align*}
H(X_A|E)_{\omega} \geq H(E)_{\omega(0)} +H(X_A)_{\omega} - H(AB)_{\omega}\ ,
\end{align*}
where the right-hand side can be calculated since $\omega_E(0)$ and $\omega_{AB}$ are Gaussian states (see~\cite{serafini04}). Note that the only dependence on the interval length $\delta$ in this formula is due to $H(X_A)_{\omega}$.
\end{appendix}

\bibliographystyle{apsrev4-1}
%

\end{document}